\documentclass[12pt]{article}
\usepackage{algcompatible}
\usepackage{algorithm}

\usepackage[pdftex]{graphicx}
\usepackage{subfigure}
\usepackage{sidecap}
\usepackage{epsfig,amsthm,amsmath,color, amsfonts}
\usepackage{epsfig,color}

\usepackage{framed}
\setlength{\textheight}{9.2in} \setlength{\textwidth}{6.55in}

\voffset=-0.9in
\hoffset=-0.6in

\newtheorem{theorem}{Theorem}[section]
\newtheorem{corollary}[theorem]{Corollary}

\newtheorem{lemma}[theorem]{Lemma}

\theoremstyle{definition}
\theoremstyle{definition}
\theoremstyle{observation}

\newcommand{\comment}[1]{}
\newcommand{\QED}{\mbox{}\hfill \rule{3pt}{8pt}\vspace{10pt}\par}


\newcommand{\pr}{PageRank}

\def\eps{{\epsilon}}

\def\polylog{\operatorname{polylog}}
\newcommand{\ignore}[1]{}
\newcommand{\eat}[1]{}


\usepackage{times}
\usepackage[small,compact]{titlesec}
\usepackage[small,it]{caption}

\newcommand{\squishlist}{
 \begin{list}{$\bullet$}
  { \setlength{\itemsep}{0pt}
     \setlength{\parsep}{3pt}
     \setlength{\topsep}{3pt}
     \setlength{\partopsep}{0pt}
     \setlength{\leftmargin}{1.5em}
     \setlength{\labelwidth}{1em}
     \setlength{\labelsep}{0.5em} } }
\newcommand{\squishend}{
  \end{list}  }

%


\def\eps{{\epsilon}}

\date{}

\begin{document}

\title{Fast Distributed PageRank Computation \footnote{Appeared in Theoretical Computer Science (TCS), volume 561, pages 113 -121, 2015 \cite{tcs2015}.}}

\author{Atish {Das Sarma} \thanks{eBay Research Labs, eBay Inc., CA, USA.
\hbox{E-mail}:~{\tt atish.dassarma@gmail.com}} \and  Anisur Rahaman Molla \thanks{Division of Mathematical
Sciences, Nanyang Technological University, Singapore 637371. \hbox{E-mail}:~{\tt anisurpm@gmail.com}.} \and Gopal Pandurangan \thanks{Division of Mathematical
Sciences, Nanyang Technological University, Singapore 637371 and Department of Computer Science, Brown University, Providence, RI 02912, USA. \hbox{E-mail}:~{\tt gopalpandurangan@gmail.com}. 
Supported in part by the following research grants: Nanyang Technological University grant M58110000, Singapore Ministry of Education (MOE) Academic Research Fund (AcRF) Tier 2 grant MOE2010-T2-2-082, and a grant from the US-Israel Binational Science Foundation (BSF).
} \and  Eli Upfal \thanks{Department of Computer Science, Brown University, Providence, RI 02912, USA. \hbox{E-mail}:~{\tt eli\_upfal@brown.edu}. Partially supported by NSF BIGDATA Award IIS 1247581.}}

\maketitle \thispagestyle{empty}

\maketitle

\begin{abstract}


Over the last decade, PageRank has gained importance in a wide range of applications and domains, ever since it first proved to be  effective in determining node importance in large graphs (and was a pioneering idea behind Google's search engine). In distributed computing alone, PageRank vector, or more generally random walk based quantities have been used for several different applications ranging from determining important nodes, load balancing, search, and identifying connectivity structures. 
Surprisingly, however, there has been little work towards designing provably efficient fully-distributed algorithms for computing PageRank. The difficulty is that traditional matrix-vector multiplication style iterative methods may not always adapt well to the distributed setting owing to communication bandwidth restrictions and convergence rates. 

In this paper, we present fast random walk-based distributed algorithms  for computing PageRanks in  general  graphs  and prove strong bounds on the round complexity.  We first present a distributed algorithm that  takes $O\big(\log n/\eps \big)$ rounds with high probability on any graph (directed or undirected), where $n$ is the network size and $\eps$ is the reset probability used in the PageRank computation (typically $\eps$ is a fixed constant).  We then present a faster algorithm that takes
$O\big(\sqrt{\log n}/\eps \big)$ rounds in undirected graphs. 
Both of the above algorithms  are  scalable, as each node  sends only small ($\polylog n$) number of  bits over each edge per round. 
To the best of our knowledge, these are the first fully distributed algorithms for computing PageRank vector with provably efficient running time. 
\end{abstract}

\noindent {\bf Keywords:} PageRank, Distributed Algorithm,  Random Walk, Monte Carlo Method

\medskip


\section{Introduction}\label{sec:intro}

In the last decade, PageRank has emerged as a very powerful measure of relative importance of nodes in a network. The term PageRank was first introduced in \cite{anatomy+page98,page99} where it was used to rank the importance of webpages on the Web. Since then, PageRank has found a wide range of applications in a variety of domains within computer science such as distributed networks, data mining, Web algorithms, and distributed computing \cite{pr-survey+05,Bianchini05,cook04,LangvilleM03}. Since PageRank vector or PageRanks is essentially the steady state distribution or the top eigenvector of the Laplacian corresponding to a slightly modified random walk process, it is an easily defined quantity. However, the power and applicability of PageRank arises from its basic intuition of being a way to naturally identify ``important'' nodes, or in certain cases, similarity between nodes. 

While there has been recent work on performing random walks efficiently in distributed networks \cite{ppr-bahmani2010,atish+pods08}, surprisingly, little  provable results are known towards efficient distributed computation of PageRanks. This is perhaps because the traditional method of computing PageRanks is to apply iterative methods i.e., do matrix-vector multiplications till (near)-convergence. Since such techniques may not adapt well in certain settings, when dealing with a global network with only local views (as is common in distributed networks such as Peer-to-Peer (P2P) networks), and particularly, very large networks, it becomes crucial to design far more efficient  techniques. Therefore, PageRank computation using Monte Carlo methods is more appropriate in a distributed model where only messages of limited size are permitted to be sent over each edge in each round.

To elaborate, a naive way to compute PageRank of nodes  in a distributed network is simply scaling iterative PageRank algorithms to
distributed environment. But this is firstly not trivial, and secondly expensive even if doable. As each iteration step needs computation results of previous steps, there needs to be continuous synchronization and several messages may need to be exchanged.  Further, the convergence time may be large. It is important to design efficient and localized distributed algorithms as communication overhead is more important than CPU and memory usage in distributed page ranking. We take all these concerns into consideration and design highly efficient fully decentralized algorithms for computing the PageRank vector in  distributed networks. \\

\noindent{\bf Our Contributions.}
In this paper, to the best of our knowledge, we present the first provably efficient fully decentralized algorithms for estimating PageRanks under a variety of settings. Our algorithms are scalable, since each node  sends only $\polylog n$ bits per round. 
Specifically, our contributions are as follows:
\begin{itemize}
\renewcommand{\labelitemi}{$\bullet$}
\item We present an algorithm, {\sc Basic-PageRank-Algorithm} (cf. Algorithm \ref{alg:simple-pagerank-walk}), that computes  PageRanks accurately in $O\big(\frac{\log n}{\eps} \big)$ rounds with high probability\footnote{Throughout, ``with high probability (w.h.p.)" means with probability at least $1 - 1/n^{c}$, where $n$ is the number of nodes in the network and $c > 1$ is  a suitably chosen constant.}, where $n$ is the number of nodes in the network and $\eps$ is the random reset probability in the PageRank random walk \cite{mcm-avrachenkov,ppr-bahmani2010,atish+pods08}. Our algorithm works for any arbitrary network (directed as well as undirected).\\

\item We present an improved algorithm, called as {\sc Improved-PageRank-Algorithm} (cf. Algorithm \ref{alg:pr-walk-undirected}), that computes  PageRanks accurately in {\em undirected graphs} and terminates with high probability in $O\big(\frac{\sqrt{\log n}}{\eps} \big)$ rounds. We note that though PageRank is usually applied for directed graphs (e.g., for the World
Wide Web), it is sometimes also applied in connection with undirected 
graphs as well \cite{fanchung06,undirected12,IvanG11,PhysRev08,Wang07} and is non-trivial to compute (cf. Section \ref{sec:pagerank-def}). In particular, it can be applied for distributed networks when modeled as undirected graphs (as is typically the case, e.g., in P2P network models).  
\end{itemize}

\noindent We note that the {\sc Improved-PageRank-Algorithm} requires only $O(\log^3 n)$ bits to be sent per round per edge, and  the {\sc Basic-PageRank-Algorithm} requires only $O(\log n)$ bits per round per edge.

 




\section{Background and Related Work}\label{sec:background}
\subsection{Distributed Computing Model}\label{model}


We model the communication network as an unweighted, connected $n$-node graph $G = (V, E)$. Each node has limited initial knowledge. Specifically, we assume that each node is associated with a distinct identity number  (e.g., its IP address). 
At the beginning of the computation, each node $v$ accepts as input its own identity number which is of length
$O(\log n)$ bits and the identity numbers of its neighbors in $G$. The node may also accept some additional inputs as specified by the problem at hand e.g., the number of nodes in the network. 
A node $v$ can communicate with any node $u$ if $v$ knows the id of $u$.\footnote{This is a typical assumption in the context of P2P and overlay  networks, where a node can establish communication with another node if it knows the other node's IP address.  We sometimes  call this {\em direct} communication,
especially when the two nodes are not neighbors in $G$. Note that our algorithm of Section \ref{sec:simple-algo} uses no direct communication between non-neighbors in $G$.} Initially, each node knows only the ids of its neighbors in $G$. We assume that the communication occurs in  synchronous  {\em rounds}, i.e., nodes run at the same processing speed and any message that is sent by some node
$v$ to its neighbors in some round $r$ will be received by the end of round $r$.
In each round, each node is allowed to send a message of size $\polylog n$ bits through each communication link (this applies to both communication via an edge in the network as well as direct communication).  

There are several measures of efficiency of distributed algorithms; here we will focus on the running time, i.e. the number of rounds of distributed communication. Note that the computation that is performed by the nodes locally is ÒfreeÓ, i.e., it does not affect the number of rounds.



\subsection{PageRank}\label{sec:pagerank-def}
We formally define the \pr of a graph $G = (V, E)$. Let $\eps$ be a small constant which is fixed ($\eps$ is called the {\em reset} probability, i.e., with probability $\eps$,  the random walk starts from a node chosen uniformly at random among all nodes in the network). The PageRank vector of a graph (e.g., see \cite{mcm-avrachenkov,ppr-bahmani2010,pr-survey+05,atish+pods08}) is the {\em stationary distribution} vector $\pi$ of the following special type of random walk: at each step of the random walk, with probability $\eps$ the walk starts from a randomly chosen node and with remaining probability $1-\eps$, the walk follows a randomly chosen outgoing (neighbor) edge from the current node and moves to that neighbor.\footnote{We sometime use the terminology ``PageRank random walk'' for this special type of random walk process.} Therefore the PageRank transition matrix on the state space (or vertex set) $V$ can be written as 
\begin{equation}\label{equ:transition-prob}
P = \left(\frac{\eps}{n}\right) J + \big(1-\eps \big)Q
\end{equation} 
where $J$ is the matrix with all entries $1$ and $Q$ is the transition matrix of a simple random walk on $G$ defined as $Q_{ij} = 1/k$, if $j$ is one of the $k>0$ outgoing links of $i$, otherwise $0$. Computing the PageRanks and its variants efficiently in various computation models has been of tremendous research interest in both academia and industry. For a detailed survey of \pr see e.g., \cite{pr-survey+05,LangvilleM03}.  
We note that PageRank is well-defined in both directed and undirected graphs. Note that it is difficult to
compute the PageRank distribution (exactly) analytically   (and no  analytical formulas  are known for general directed graphs) and hence  various computational methods have been 
used to estimate the PageRank distribution. In fact, this is true even for  {\em undirected} graphs as well \cite{undirected12}.

There are mainly two broad approaches to computing PageRanks (e.g., see \cite{bahmani11}). One is to using linear algebraic techniques, (e.g., the Power Iteration \cite{page99}) and the other
approach is Monte Carlo \cite{mcm-avrachenkov}. In the Monte Carlo method, the basic idea is to approximate PageRanks by directly simulating the
corresponding random walk and then estimating the stationary distribution with the performed walk's distribution. In \cite{mcm-avrachenkov} Avrachenkov et al., proposed the following Monte Carlo method
for \pr approximation: Perform $K$ random walks (according to the \pr transition probability) starting from each node $v$ of the graph $G$. For each walk, terminate the walk with its first reset instead of moving to a random node. It is shown that the frequencies of visits of all these random walks to different nodes will approximate the PageRanks. Our distributed algorithms are based on the above  method. 

Monte Carlo methods are efficient, light weight and highly scalable \cite{mcm-avrachenkov}. Monte Carlo methods have been useful  in designing algorithms for PageRank and its variants in  important computational models like data streaming \cite{atish+pods08} and MapReduce \cite{bahmani11}. 
The works in \cite{ssb03,shi03} study distributed implementation of \pr in peer-to-peer networks but use iteration methods.

\section{A Distributed Algorithm for PageRank}\label{sec:simple-algo}

We present a Monte Carlo based distributed algorithm for computing PageRank distribution of a network \cite{mcm-avrachenkov}. The main idea of our algorithm (formal pseudocode is given in Algorithm \ref{alg:simple-pagerank-walk}) is as follows. Perform $K$ ($K$ will be fixed appropriately later) random walks starting from each node of the network in parallel. In each round, each random walk independently  goes to a random (outgoing) neighbor with probability $1-\eps$ and with the remaining probability (i.e., $\eps$) terminates in the current node. Henceforth, we call such a random walk a {\em `PageRank random walk'}. In \cite{mcm-avrachenkov}, this random walk process is shown to be equivalent to one based on the PageRank transition matrix $P$, defined in Section 2.2. It is easy to see that picking each node as starting point for the same number of times (i.e., restarting walks according to the uniform distribution) accounts for  the $(\eps/n) J$ term in Equation~\ref{equ:transition-prob};  and between any two restarts, we just have a simple random walk that terminates with probability $\eps$ in each step --- which accounts for  the $(1-\eps)Q$ term. Since $\eps$ is the probability of termination of a walk in each round, the expected length of every walk is $1/\eps$ and the length  will be at most $O(\log n/\eps)$ with high probability.  Let every node $v$ count the number of visits (say, $\zeta_v$) of all the walks that go through it. Then, after termination of all walks in the network, each node $v$ computes (estimates)  PageRank $\pi_v$ as $\tilde \pi_v = \frac{\zeta_v \eps}{n K}$. Notice that $\frac{nK}{\eps}$ is the (expected) total number of visits over all nodes of all the $n K$ walks. The above idea of counting the number of visits is a standard technique to approximate PageRanks (see e.g., \cite{mcm-avrachenkov,ppr-bahmani2010}). We want to note that our algorithm in this section does not require any direct communication between non-neighbors. 

We show in the next section that the above algorithm computes PageRank vector $\pi$ accurately (with high probability) for an appropriate value of $K$. The main technical challenge in implementing the above method is  that performing many walks from each node in parallel can create a lot of congestion. Our algorithm uses a crucial idea to overcome the congestion. We show that (cf. Lemma \ref{lem:congestion}) that there will be no congestion in the network even if we start a polynomial number of random walks from every  node in parallel. The main idea is based on the Markovian (memoryless) properties of the random walks and the process that terminates the random walks. To calculate how many walks move from  node $i$ to node $j$, node $i$ only needs to know the number of walks that reached it. It does not need to know the sources of these walks or the transitions that they took before reaching node $i$.  Thus it is enough to  send the {\em count} of the number of walks that pass through a node. The algorithm runs till all the walks are terminated which is at most $O(\log n/\eps)$ rounds with high probability. Then every node $v$ outputs  PageRank as the ratio between the number of visits (denoted by $\zeta_v$) to it and the total number of visits over all nodes of all the walks $\big(\frac{nK}{\eps} \big)$.  We show that our algorithm computes approximate PageRanks  in $O(\log n/\eps)$ rounds with high probability (cf. Theorem \ref{thm:main-round}).

\newcommand{\mindegree}[0]{\delta}
\begin{algorithm}[H]
\caption{\sc Basic-PageRank-Algorithm}
\label{alg:simple-pagerank-walk}
\textbf{Input (for every node):} Number of nodes $n$ and reset probability $\eps$.\\
\textbf{Output:} Approximate PageRank of each node.\\
\textbf{[Each node $v$ starts $K = c\log n$ walks, where $c =  \frac{2}{\delta' \eps}$ and $\delta'$ is defined in Section \ref{sec:correctness}. All walks keep moving in parallel until they terminate. The termination probability of each walk is $\eps$, so the expected length of each walk is $1/\eps$.]}\\
\begin{algorithmic}[1]
\STATE Each node $v$ maintains a count variable ``$couponCount_v$" corresponding to number of random walk coupons held by $v$. Initially, $couponCount_v = K$ for starting $K$ random walks. 

\STATE Each node $v$ also maintains a counter $\zeta_v$ for counting the number visits of  random walks to it. Set $\zeta_v = K$. 

\FOR{round $i = 1, 2, \ldots, B\log n/\eps$} \hspace{0.1in}//[for sufficiently large constant $B$]

\STATE Each node $v$ holding at least one alive coupon (i.e., $couponCount_v \neq 0$) does the following in parallel: 

\STATE For every neighbor $u$ of $v$, set $T^u_v = 0$  \hspace{0.5in}// [$T^u_v$ is the number of random walks moving from $v$ to $u$ in round $i$] 

\FOR{$j = 1,2, \ldots, couponCount_v$} 

\STATE With probability $1 - \eps$, pick a uniformly random outgoing neighbor $u$

\STATE $T^u_v : = T^u_v + 1$ 

\ENDFOR

\STATE Send the coupon counter number $T^u_v$ to the respective outgoing neighbors $u$. 

\STATE Each node $u$ computes:  $\zeta_u = \zeta_u + \sum_{v \in N(u)} T^u_v$.  \hspace{0.5in} //[the quantity $\sum_{v \in N(u)} T^u_v$ is  the total number of visits of random walks to $u$ in $i$-th round (from its neighbors)]

\STATE Each node $u$ update the count variable $couponCount_u = \sum_{v \in N(u)} T^u_v$

\ENDFOR

\STATE Each node $v$ outputs its PageRank as $\frac{\zeta_v \eps}{c n \log n}$.

\end{algorithmic}

\end{algorithm}

\subsection{Analysis}
Our algorithm computes the PageRank of each node $v$ as $\tilde \pi_v = \frac{\zeta_v \eps}{n K}$ and we say that $\tilde \pi_v$ approximates original PageRank $\pi_v$. We first focus on the correctness of our approach and then analyze the running time. 

\subsection{Correctness of PageRank Approximation}\label{sec:correctness}
The correctness of the above approximation follows directly from the main result of \cite{mcm-avrachenkov} (see Algorithm $4$ and Theorem $1$) and also from \cite{ppr-bahmani2010} (Theorem $1$). In particular, it is mentioned in \cite{mcm-avrachenkov,ppr-bahmani2010} that the approximate \pr value is quite good even for $K = 1$. It is easy to see that the expected value of $\tilde \pi_v$ is $\pi_v$ (formal proof is given in \cite{mcm-avrachenkov}). Now it follows from the Theorem~1 in \cite{ppr-bahmani2010} that, $\tilde \pi_v$ is sharply concentrated around its expectation $\pi_v$.
We included the proof of the theorem below for the sake of completeness.


\begin{theorem}[Theorem~1 in \cite{ppr-bahmani2010}]\label{thm:pr-concentration-bahmani}
 $\Pr \big[\mid \tilde \pi_v - \pi_v \mid \geq \delta \pi_v \big] \leq e^{-nK\pi_v \delta'}$, where $\delta'$ is a constant depending on $\eps$, the reset probability and $\delta$. 
\end{theorem}
\begin{proof}
For simplicity we first show the result assuming $K = 1$. For general value of $K$, it will follow in the similar way. Fix an arbitrary node $v$. Define $X_u$ to be $\eps$ times
the number of visits to $v$ in the walk started at $u$, $Y_u$ to be
the length of this walk, $W_u = \eps Y_u$, and $x_u = E[X_u]$. Then, $X_u$'s are independent, $\tilde \pi_v = \frac{\sum_u X_u}{n}$ and hence $\pi_v = \frac{\sum_u x_u}{n}$, $0 \leq X_u \leq W_u$, and $E[W_u] = 1$. Then it follows easily that, 
\begin{align*}
E\big[e^{tX_u}\big] & \leq x_u E\big[e^{tW_u}\big] + 1 - x_u  \hspace{0.2in}\text{[From the definition of expectation]} \\ & = x_u \left(E\big[e^{tW_u}\big] - 1\right) + 1 \\ & \leq e^{- x_u \left(1 - E\big[e^{tW_u}\big]\right)} \hspace{0.5in} \text{[Since $1 + y \leq e^y$ for any $y$]}
\end{align*}
Thus,
\begin{align*}
\Pr \big[\tilde \pi_v \geq (1 + \delta)\pi_v \big]  & \leq \frac{E[e^{tn\tilde \pi_v}]}{e^{tn(1 + \delta)\pi_v}} \hspace{0.5in} \text{[Markov's inequality]} \\ & = \frac{E\big[e^{t\sum_u X_u}\big]}{e^{tn(1 + \delta)\pi_v}} = \frac{\prod_u E[e^{tX_u}]}{e^{tn(1 + \delta)\pi_v}} \leq \frac{\prod_u e^{-x_u \left(1 - E[e^{tW_u}]\right)}}{e^{tn(1 + \delta)\pi_v}} \\ & = \frac{e^{-\left(\sum_u x_u \left(1 - E[e^{tW_u}]\right)\right)}}{e^{tn(1 + \delta)\pi_v}} = \frac{e^{-n\pi_v \left(1 - E[e^{tW}] \right)}}{e^{tn(1 + \delta)\pi_v}}  \\ & = e^{-n\pi_v \left(1 + t(1 + \delta) - E[e^{tW}]\right)} \\ & \leq e^{-n\pi_v \delta'}
\end{align*}
where $W = \eps Y$ is a random variable with $Y$ having geometric distribution with parameter $\eps$, and $\delta' = 1 + t(1 + \delta) - E[e^{tW}]$ is a constant depending on $\delta$ and $\eps$, and can be found by optimization over $t$. 

The proof for the other direction $\Pr[\tilde \pi_v \leq (1 - \delta)\pi_v]$ is similar. 
\end{proof}
From the above bound (cf. Theorem \ref{thm:pr-concentration-bahmani}), we see that for $K = \frac{2\log n}{\delta' n\pi_{min}}$, $\Pr[\mid \tilde \pi_v - \pi_v \mid \geq \delta \pi_v] \leq n^{-2}$ for any $v$, where $\pi_{min}$ is minimal PageRank. Using union bound, it follows that there exist a node $v$ such that $\Pr[\mid \tilde \pi_v - \pi_v \mid \geq \delta \pi_v]$ is at most $|V|n^{-2} = 1/n$. Hence, for all nodes $v$, $\mid \tilde \pi_v - \pi_v \mid \leq \delta \pi_v$ with probability at least $1 - 1/n$, i.e., with high probability. This implies that we get a  $\delta$-approximation of the PageRank vector with high probability for $K = \frac{2\log n}{\delta' n\pi_{min}}$. Note that $\delta$ can be arbitrary.  
Since the \pr of any node is at least $\eps/n$ (i.e., the minimal \pr value, $\pi_{min} \geq \eps/n$), so it gives $K = \frac{2\log n}{\delta' \eps}$. 
For simplicity we define that $c =  \frac{2}{\delta' \eps}$, which is constant assuming $\delta$ (and hence $\delta'$) and $\eps$ are constant. Therefore, it is enough if we perform $c\log n$ \pr random walks from each  node. We note that while this  value of $K$ is sufficient to guarantee a constant approximation of the PageRanks, our algorithm permits a larger value of $K$, allowing for tighter approximation with the same running time (follows from Lemma \ref{lem:congestion} below). Now we focus on the running time of our algorithm. 

\subsection{Time Complexity}\label{sec:complexity}
From the above section we see that our algorithm is able to compute the PageRank vector $\pi$ in $O(\log n/\eps)$ rounds with high probability if we can perform $c\log n$ walks from each node in parallel without any congestion.  The lemma below guarantees that there will be no congestion even if we do a polynomial number of walks in parallel.   


\begin{lemma}\label{lem:congestion}
The algorithm can be implemented such that the message size is at most $O(\log n)$ per each edge in every round. 
\end{lemma}
\begin{proof}
It follows from our algorithm that each node only needs to count the number of visits of random walks to itself.
Since the total number of random walk coupons in the network is polynomially bounded, $O(\log n)$ bits suffice. 
\end{proof}

\begin{theorem}\label{thm:main-round}
The algorithm {\sc Basic-PageRank-Algorithm} (cf. Algorithm \ref{alg:simple-pagerank-walk}) computes a $\delta$-approximation of the PageRanks in $O\left(\frac{\log n}{\eps} \right)$ rounds with high probability for any constant $\delta$. 
\end{theorem}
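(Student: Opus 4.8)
The plan is to split the argument into three essentially independent pieces: (i) bound the number of rounds until every PageRank random walk has terminated, (ii) check that a single iteration of the \textsc{while} loop is one legal round in the CONGEST model, and (iii) fold in the approximation guarantee already established in Section \ref{sec:correctness}. Only the first piece requires any probabilistic work, and even that is an elementary tail bound.

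First I would analyze the length of a single walk. By construction each walk is terminated independently at every step with probability $\eps$, so its length $L$ is geometric with $\Pr[L > t] = (1-\eps)^t \le e^{-\eps t}$. Choosing $t = c'\ln n / \eps$ for a sufficiently large constant $c'$ makes this at most $n^{-c'}$. The algorithm launches $K = c\log n$ walks from each of the $n$ nodes, for a total of $nK = \Theta(n\log n)$ walks; a union bound over all of them shows that with probability at least $1 - n^{-\Omega(1)}$ every walk has terminated within $O(\log n/\eps)$ rounds. Since the \textsc{while} loop runs precisely until no coupon is alive, this bounds the number of iterations by $O(\log n/\eps)$ whp.

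Next I would argue that each iteration is a single valid CONGEST round. In iteration $i$, node $v$ only needs to forward, for each outgoing neighbor $u$, the integer count $T^v_u$ of coupons routed to $u$; every such count is at most the total number of walks $nK = \Theta(n\log n) = \mathrm{poly}(n)$ and therefore fits in $O(\log n)$ bits. By Lemma \ref{lem:congestion} there is no congestion, so each edge carries one polylogarithmic-size message per round. Generating the fresh random choices and updating $\zeta_u$ is local computation and is free in this model. Combining with the correctness analysis of Section \ref{sec:correctness}: with $K = c\log n = \tfrac{2\log n}{\delta'\eps}$, the concentration bound (\ref{equ:convergence}) together with $\pi_{\min} \ge \eps/n$ gives that $\tilde\pi_v = \tfrac{\zeta_v \eps}{nK}$ is within a $(1\pm\delta)$ factor of $\pi_v$ simultaneously for all $v$ with probability $1 - n^{-\Omega(1)}$. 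A final union bound over this event and the ``some walk is too long'' event shows that, whp, the algorithm terminates in $O(\log n/\eps)$ rounds and outputs an accurate PageRank estimate.

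I do not expect a genuine obstacle here; the only point needing care is the bookkeeping in the two union bounds. One must start enough walks ($\Theta(n\log n)$ of them) for $\tilde\pi_v$ to concentrate, yet this count must remain polynomial so that (a) the per-round messages stay polylogarithmic in size and (b) the union bound over walk lengths loses only a polynomial factor and hence stays $n^{-\Omega(1)}$. Because the required number of walks per node is only logarithmic, both constraints are met comfortably, and the whole argument rests on nothing beyond the geometric tail bound and the cited approximation result.
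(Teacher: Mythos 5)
Your proposal is correct and follows essentially the same route as the paper's proof: bound the maximum walk length by $O(\log n/\eps)$ whp via a tail bound on the geometric termination time plus a union bound over the polynomially many walks, invoke Lemma \ref{lem:congestion} to ensure each iteration is a legal congestion-free round, and defer the approximation guarantee to the analysis of Section \ref{sec:correctness}. Your use of the exact geometric tail $(1-\eps)^t \le e^{-\eps t}$ is a slightly cleaner justification than the paper's appeal to a Chernoff bound, but the argument is otherwise identical.
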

\begin{proof}
The algorithm outputs the RageRanks when all the walks terminate. Since the termination probability is $\eps$, in expectation after $1/\eps$ steps, a walk terminates and with high probability (via a Chernoff bound) the walk terminates in $O(\log n/\eps)$ rounds. By the union bound \cite{MU-book-05}, all walks (they are only polynomially many) terminate
in $O(\log n/\eps)$ rounds with high probability. Since all the walks are moving in parallel and there is no congestion (follows from the Lemma \ref{lem:congestion}), all the walks in the network terminate in $O(\log n/\eps)$ rounds with high probability. Hence the algorithm is able to output the PageRanks in $O(\log n/\eps)$ rounds with high probability. The correctness of the PageRanks approximation follows  from \cite{mcm-avrachenkov,ppr-bahmani2010} as discussed earlier in Section \ref{sec:correctness}. The $\delta$-approximation guarantee is follows from the Theorem~\ref{thm:pr-concentration-bahmani}.
\end{proof}

\section{A Faster Distributed \pr Algorithm (for  Undirected Graphs)}\label{sec:undirected}
We present a faster algorithm for PageRanks computation in {\em undirected} graphs. Our algorithm's time complexity holds in the bandwidth restricted communication model, requires only $O(\log^3 n)$ bits to be sent over each link in each round.  

We use a similar Monte Carlo method as described in Section \ref{sec:simple-algo} to estimate PageRanks. This says that the PageRank of a node $v$ is the ratio between the number of visits of \pr random walks to $v$ itself and the sum of all the visits over all nodes in the network. In the previous section (cf. Section \ref{sec:simple-algo}) we show that in $O(\log n/\eps)$ rounds, one can approximate RageRank accurately by walking in a naive way in general graphs. We now outline how to speed up our previous algorithm (cf. Algorithm \ref{alg:simple-pagerank-walk}) using an  idea similar to the one used in \cite{DasSarmaNPT10}. In \cite{DasSarmaNPT10}, it is shown how one can perform {\em a}  simple  random walk in an undirected graph\footnote{In each step, an edge is taken from
the current node $x$ with probability proportional to $1/d(x)$ where
$d(x)$ is the degree of $x$.}  of length $L$  in $\tilde O\big(\sqrt{LD}\big)$ rounds w.h.p. ($D$ is the diameter of the network). The high level idea of their algorithm  is to perform `many' short walks in parallel and later `stitch' them to get the desired longer length walk. To apply this idea in our case, we  modify our approach accordingly as speeding up ({\em many}) PageRank random walks is  different from speeding up {\em one}  simple random walk. 
We show that our improved algorithm (cf. Algorithm \ref{alg:pr-walk-undirected}) approximates PageRanks in $O\big(\frac{\sqrt{\log n}}{\eps}\big)$ rounds. 



\subsection{Description of Our Algorithm}


In  Section \ref{sec:simple-algo}, we showed that by performing $\Theta(\log n)$ walks (in particular we are performing $c\log n$ walks, where $c = \frac{2}{\delta' \eps}$, $\delta'$ is defined in Section \ref{sec:correctness}) of length $\log n/\eps$ from each node, one can estimate the \pr vector $\pi$ accurately with high probability. In this section we focus on the problem of  performing efficiently $\Theta(n\log n)$ walks ($\Theta(\log n)$ from each node) each of length $\log n/\eps$ and count the number of visits of these walks to different nodes.
Throughout, by ``random walk'' we mean the ``PageRank random walk" (cf. Section \ref{sec:simple-algo}). 

The main idea of our algorithm is to first perform `many' short random walks in parallel and then `stitch' those short walks to get the longer walk of length $\log n/\eps$ and subsequently `count' the number of visits of these random walks to different nodes. In particular, our algorithm runs in three phases. In the first phase, each node $v$ performs $d(v) \eta$ ($d(v)$ is degree of $v$) independent `short' random walks of length $\lambda$ in parallel. While value of the parameters $\eta$ and $\lambda$ will be fixed later in the analysis, the assigned value will be $O(\log^2 n/\eps)$ and $\sqrt{\log n}$ respectively. This is done naively by forwarding $d(v)\eta$ `coupons' having the ID of $v$ from $v$ (for each node $v$) for $\lambda$ steps via random walks. Besides the node's ID, we also assign a coupon number ``$Coupon_{ID}$" to each coupon to keep track the path followed by the random walk coupon. The intuition behind performing $d(v)\eta$ short walks is that the PageRanks of an undirected graph is proportional to the degree distribution \cite{undirected12}. Therefore we can easily bound the number of visits of random walks to any node $v$ (cf. Lemma \ref{lem:visit-bound}). At the end of this phase, if node $u$ has $k$ random walk coupons with the ID of a node $v$, then $u$ is a destination of $k$ walks starting at $v$. Note that just after this phase, $v$ has no knowledge of the destinations of its own walks, but it can be  known by direct communication from the destination nodes. The destination nodes (at most $d(v)\eta$) have the ID of the source node $v$. So they can contact the source node via {\em direct} communication. We show that this takes at most constant number of rounds  as only polylogarithmic number of bits are sent (since $\eta$ will be at most $O(\log^2 n/\eps)$).  It is shown that the first phase takes $O\big(\frac{\lambda}{\eps} \big)$ rounds (cf. Lemma \ref{lem:phase1}).

In the second phase, starting at source node $s$, we `stitch' some of the $\lambda$-length walks prepared in first phase. Note that we do this for every node $v$ in parallel as we want to perform $\Theta(\log n)$ walks from each node. The algorithm starts from $s$ and samples one coupon distributed from $s$ in Phase 1. 
In the end of Phase 1, each node $v$ knows the destination node's ID of its $d(v)\eta$ short walks (or coupons). When a coupon needs to be sampled, node $s$ chooses a coupon number sequentially (in order of the coupon IDs) from the unused set of coupons and informs the destination node (which will be the next stitching point) holding the coupon $C$ by direct communication, since $s$ knows the ID of the destination node at the end of the first phase. 
Let $C$ be the sampled coupon and $v$ be the destination node of $C$. The source $s$ then sends a `token' to $v$ and $s$ deletes the coupon $C$ so that $C$ will not be sampled again next time at $s$. This is because our goal is to produce independent random walks of a given length, so naturally we do not reuse the same short walks, or in other words, this will preserve randomness when we concatenate short walks. The process then repeats. That is, the node $v$ currently holding the token samples one of the coupons it distributed in Phase 1 and forwards the token to the destination of the sampled coupon, say $u$. Nodes $v, u$ are called `connectors' --- they are the endpoints of the short walks that are stitched. A crucial
observation is that the walk of length $\lambda$ used to distribute the corresponding coupons from $s$ to $v$ and from $v$ to $u$ are independent random walks. Therefore, we can stitch them to get a random walk of length $2\lambda$. We therefore can generate a random walk of length $3\lambda, 4\lambda, \ldots$ by repeating this process. We do this until we have completed  a length of at least $\big( O(\log n/\eps) - \lambda \big)$. Then, we complete the rest of the walk by doing the naive random walk algorithm. Note that in the beginning of Phase~2, we first

\begin{algorithm}[H]\small 
\caption{\sc Improved-PageRank-Algorithm}
\label{alg:pr-walk-undirected}
\textbf{Input (for every node):}  Number of nodes $n$, reset probability $\eps$ and short walk length $\lambda = \sqrt{\log n}$.\\
\textbf{Output:} Approximate PageRank of each node.\\


\textbf{Phase 1: (Each node $v$ performs $d(v)\eta = O(d(v)\log^2 n/\eps)$
random walks of length $\lambda = \sqrt{\log n}$. 
At the end of this phase, there are $d(v)\log^2 n/\eps$ (not necessarily
distinct) nodes holding a `coupon' containing the ID of $v$.)}
\begin{algorithmic}[1]

\STATE Each node $v$ construct $Bd(v)\log^2 n/\eps$ messages $C = \langle ID_v,  \lambda, Coupon_{ID} \rangle$. \hspace{0.1in}// [We will refer to these messages created by node $v$ as `coupons created by $v$'.]

\FOR{$i=1$ to $\lambda$}

\STATE This is the $i$-th iteration. Each node $v$ holding at least one coupon does the following in parallel: 
\FOR{each coupon $C$ held by $v$} \hspace{0.3in}// [i.e., the coupons which received by $v$ in the $(i - 1)$-th iteration.]  
\STATE Generate a random number $r \in [0, 1]$. 

\IF {$r< \eps$}

\STATE Terminate the coupon $C$ and keep the coupon as then $v$ itself is the destination. 

\ELSE 

\STATE pick a neighbor $u$ uniformly at random for the coupon $C$ and forward $C$ to $u$. 


\ENDIF
\ENDFOR

\noindent \COMMENT{Note that an iteration could require more than 1 round, because of congestion}

\ENDFOR


\STATE Each destination node sends its ID to the source node, as it has the source node's ID now. \hspace{0.2in}// [destination nodes hold the short walk coupon(s) $C$ and contact the source nodes through {\em direct} communication.]

\end{algorithmic}

\textbf{Phase 2: (Stitch short walks by token forwarding. Stitch approximately $\Theta(\sqrt{\log n}/\eps)$ walks, each of
length  $\sqrt{\log n}$. Recall that each node wants to perform $K = c\log n$ long random walks, where $c =  \frac{2}{\delta' \eps}$ and $\delta'$ is defined in Section \ref{sec:correctness})}



\begin{algorithmic}[1]

\STATE  Each node $v$ generates $K$ ``tokens''  $\langle ID_v,  L \rangle$, where $L$ is a random integer value $x$ chosen with probability $\eps(1-\eps)^{x-1}$ \hspace{0.3in}//  [$L$ is drawn from the geometric distribution with parameter $\eps$ i.e., from the distribution of the lengths of random walks.]

\FOR{$i = 1, 2, \ldots, B_1\sqrt{\log n}/\eps$} \hspace{0.1in}//[for sufficiently large constant $B_1$]

\STATE Each node $v$ holding at least one {\em token} with $L>0$ does the following in parallel: 

\STATE For each token $\langle ID_v,  L \rangle$ with $L\geq \lambda$, send $\langle ID_v,  L - \lambda, Coupon_{ID} \rangle$ to $u$, where $u$ is sampled using a coupon of sequence number $Coupon_{ID}$ from the set of the coupons distributed by $v$ in Phase~1,  and delete the token  $\langle ID_v,  L \rangle$ \hspace{0.5in}//  [$v$ sends to $u$ via the {\em direct} communication.] 

\STATE For each such received message $\langle ID_v,  L - \lambda, Coupon_{ID} \rangle$, node $u$ memorizes $(ID_v, Coupon_{ID})$ and creates a token $\langle ID_u,  L - \lambda \rangle$    \hspace{0.3in}//  [Each node $u$ memorizes it for backtracking in Phase 3.] 

\ENDFOR

\STATE \label{step:count} For the remaining tokens  $\langle ID_v,  L \rangle$ (whose $L >0$), it holds that $L < \lambda$:  for each of them walk naively in parallel for another $\lambda$ steps. 

\end{algorithmic}

\textbf{Phase 3: (Counting the number of visits of  short walks to a node)}
\begin{algorithmic}[1]

\STATE Each node $w$ maintains a counter $\zeta_w$ to keep track of the number of visits of walks at $w$. $\zeta_w$ is initialized to $K$.  

\STATE Each node $u$ which memorizes coupon IDs $(ID_v, Coupon_{ID})$ in Phase 2, does the following in parallel: 


\STATE  For each such coupon, starting from $u$ trace all the short random walks in reverse.

\STATE Count the number of visits to any node $w$ during this reverse tracing and add to $\zeta_w$. Also count the visits during `naively walking' walks (Step \ref{step:count} in Phase 2) and add it to $\zeta_w$.    


\STATE Each node $v$ outputs its PageRank $\pi_v$ as $\frac{\zeta_v \eps}{c n \log n}$. 
  
\end{algorithmic}

\end{algorithm}

\noindent check the length of survival of each walk and then stitch them accordingly.  We show that  Phase 2 finishes in $O\big(\frac{\log n}{\lambda \eps} + \lambda \big)$ rounds (cf. Lemma \ref{lem:phase2}).

In the third phase we count the number of visits of all the random walks to a node. As we have discussed,  we have to create many short walks of length $\lambda$ from each node. Some short walks may not be used to make the long walk of length $\log n/\eps$. We show a technique to count all the used short walks' visits to different nodes.  We note  that after completion of Phase 2, all the $\Theta(n\log n)$  long walks ($\Theta(\log n)$ from each node) have been  stitched.  During stitching  (i.e., in Phase 2), each connector node (which is also the end point of the short walk) should remember the source node and the $Coupon_{ID}$ of the short walk. Then start from  each of the connector nodes and do a walk in reverse  direction (i.e., retrace the short walk backwards) to the respective source nodes in parallel. During the reverse walk, simply count the visits to nodes. It is easy to see that this will take at most $O(\frac{\lambda}{\eps})$ rounds, in accordance with Phase~1 (cf. Lemma \ref{lem:phase3}). Now we analyze the running time of our algorithm {\sc  Improved-PageRank-Algorithm}. The compact pseudo code is given in Algorithm \ref{alg:pr-walk-undirected}.

\subsection{Analysis}
First we are interested in the value of $\eta$ i.e., the number of coupons (short walks)  needed from each node to successfully answer all the stitching requests. Notice that it is possible that $d(v)\eta$ coupons are not enough if $\eta$ is not chosen suitably large: We might forward the token to some node $v$ many times in Phase 2 and all coupons distributed by $v$ in the first phase may be deleted. In other words, $v$ is chosen as a connector node many times, and all its coupons have been exhausted.
If this happens then the stitching process cannot progress. To fix this problem, we use an easy upper bound of the number of visits to any node $v$ of a random walk of length $\ell$ in an undirected graph: $d(v)\ell$ times. 
Therefore each node $v$ will be visited as a connector node at most $O(d(v)\ell)$ times. This implies that each node does not have to prepare too many short walks. 


The following lemma bounds the number of visits to every node when we do $\Theta(\log n)$ walks
from each node, each of length $\log n/\eps$  (note that this is the maximum length of a long walk, w.h.p.). 


\begin{lemma}\label{lem:visit-bound}
If {\em each} node performs $\Theta(\log n)$ random walks of length $\log n/\eps$, then no node $v$ is visited more than $O\big(\frac{d(v) \log^2 n}{\eps} \big)$ times with high probability. 
\end{lemma}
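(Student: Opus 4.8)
The plan is to bound the total number of visits $\zeta_v$ to $v$ summed over \emph{all} $nK$ walks (with $K=\Theta(\log n)$ walks started from each of the $n$ nodes) by first controlling its expectation and then applying a concentration inequality. I would write $\zeta_v = \sum_{w} X_w$, where the sum ranges over all $nK$ walks and $X_w$ denotes the number of times walk $w$ visits $v$. Since each walk has length at most $\ell=\log n/\eps$, we have $X_w \le \ell$ deterministically (capping a walk at length $\ell$ only decreases its visit count, so all the bounds below are upper bounds), and, crucially, the variables $\{X_w\}$ are mutually independent because distinct walks are run independently.

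For the expectation, I would use the special structure of the PageRank random walk on an \emph{undirected} graph. This walk follows the simple-random-walk matrix $Q$ but is killed at each step with probability $\eps$; hence the expected number of visits to $v$ of a single walk started at $u$ equals $[(I-(1-\eps)Q)^{-1}]_{uv} = \sum_{s\ge 0}(1-\eps)^s [Q^s]_{uv}$. The key fact for undirected graphs is that the degree vector $d$ is a left eigenvector of $Q$ with eigenvalue $1$, i.e.\ $d^\top Q = d^\top$ (reversibility). Consequently $d^\top (I-(1-\eps)Q)^{-1} = d^\top/\eps$, so the sum of the expected visits to $v$ over all starting nodes, weighted by degree, is exactly $d(v)/\eps$. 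Since every node has $d(u)\ge 1$ and all terms are nonnegative, the \emph{uniform}-start sum is at most this degree-weighted sum; multiplying by the $K=\Theta(\log n)$ walks per node yields $\mathbb{E}[\zeta_v] \le K\,d(v)/\eps = O(d(v)\log n/\eps)$. This is exactly the intuition quoted in the text (``visits proportional to degree''), now made precise.

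It remains to convert this bound on the mean into a high-probability bound. I would rescale $Y_w = X_w/\ell \in [0,1]$ and apply a multiplicative Chernoff bound to the independent sum $\sum_w Y_w$, whose mean is $\mathbb{E}[\zeta_v]/\ell = O(d(v))$. The target value $T = O(d(v)\log^3 n/\eps)$ corresponds to $\sum_w Y_w = \Theta(d(v)\log^2 n)$, which exceeds the mean by a factor $R = \Theta(\log^2 n)$, \emph{independent} of $d(v)$. The resulting tail $(e/R)^{T/\ell}$ with $T/\ell = \Theta(d(v)\log^2 n) \ge \Theta(\log^2 n)$ (using $d(v)\ge 1$) is at most $(\log n)^{-\Omega(\log^2 n)}$, super-polynomially small, so a union bound over all $n$ choices of $v$ preserves the high-probability guarantee.

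The step I expect to be the main obstacle is the concentration argument, since the per-walk visit counts $X_w$ are not $\{0,1\}$-valued and, within a single walk, successive visits to $v$ are highly correlated. The clean way around this is precisely the decomposition above: the correlations are confined within each walk, the walks are independent of one another, and each $X_w$ is bounded by $\ell$, so a Chernoff/Hoeffding bound for independent bounded variables applies directly. The comfortable $\log^2 n$ gap between the mean $O(d(v)\log n/\eps)$ and the claimed bound $O(d(v)\log^3 n/\eps)$, together with $d(v)\ge 1$, is exactly what makes the deviation probability beat $1/\mathrm{poly}(n)$ for every node simultaneously.
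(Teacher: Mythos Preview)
Your argument is correct, but it organizes the computation differently from the paper. The paper's proof is a \emph{per-round} argument: it views the $\Theta(\log n)$ walks of length $\log n/\eps$ from each node as a single walk of length $\Theta(\log^2 n/\eps)$ per node (i.e., serializing the $K$ walks), observes that at any fixed round the expected number of walkers sitting at $v$ is at most $d(v)$ (this is the same reversibility fact you use, in the form $\sum_u [Q^t]_{uv}\le d(v)$), applies Chernoff per round to get $O(d(v)\log n)$ whp, and then multiplies by the $\Theta(\log^2 n/\eps)$ rounds and union-bounds. You instead decompose \emph{per walk}: you bound $\mathbb{E}[\zeta_v]$ in one shot via the Green-function identity $d^\top (I-(1-\eps)Q)^{-1}=d^\top/\eps$ together with $d(u)\ge 1$, and then apply a single Chernoff bound to the sum of the independent, $[0,\ell]$-bounded per-walk visit counts. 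Both routes hinge on the same undirected-graph fact (degree is a left fixed vector of $Q$); your version makes that dependence explicit and avoids the extra union bound over rounds, at the cost of needing the ``bounded increments'' form of Chernoff rather than the Bernoulli form. Either way the $\log^2 n$ slack between the mean $O(d(v)\log n/\eps)$ and the target $O(d(v)\log^3 n/\eps)$ is what drives the tail below $1/\mathrm{poly}(n)$, exactly as you note.
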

\begin{proof}
We show the above  bound on the number of visits still holds if each node $v$ performs $\Theta \big(d(v)\log n \big)$ random walks of length $\log n/\eps$. Suppose each node $v$ starts $\Theta \big(d(v) \log n \big)$ simple random walks in parallel. We claim that after any given number of steps $i$, the expected number of random walks at node $v$ is still $\Theta \big(d(v)\log n \big)$. Consider the random walk's transition probability matrix $A$. Then, $A{\bf x} = {\bf x}$ holds for the stationary distribution ${\bf x}$ having value $\frac{d(v)}{2m}$, where $m$ is the number of edges in the graph. Now the number of random walks started at any node $v$ is proportional to its stationary distribution, therefore, in expectation, the number of random walks at any node after $i$ steps remains the same. We show this is true with high probability using Chernoff bound technique, since the random walks are independent. For each random walk coupon $C$, any $i = 1, 2, \ldots, \log n/\eps$, and any vertex $v$, we define $W_C^i(v)$ to be the random variable having value
1 if the random walk $C$ is at $v$ after $i^{th}$ step. Let $W^i(v)=\sum_{C: \text{random walk}} W_C^i(v)$, i.e., $W^i(v)$ is the total number of random walks are at $v$ after $i^{th}$ step. By Chernoff bound, for
any vertex $v$ and any $i$,
$$\Pr \big[W^i(v)\geq 18 d(v)\log{n} \big]\leq 2^{-3 d(v)\log{n}} \leq n^{-3}.$$
It follows that the probability that there exists an vertex $v$ and an
integer $1\leq i\leq \log n/\eps$ such that $W^i(v)\geq 18 d(v)\log{n}$ is
at most $|V(G)| (\log n/\eps) n^{-3}\leq \frac{1}{n}$ since $|V(G)| =
n$ and $\log n/\eps \leq n$. Therefore, $W^i(v) \leq 18 d(v)\log{n}$ for all $v$ and for all $i$, with high probability.  

Now, if each node starts $\Theta(\log n)$ independent random walks that terminate with probability $\eps$ in each step, the number of random walks to any node $v$ is dominated from above by $\Theta \big(d(v)\log n \big)$. This is because there will be at most $n\log n$ random walk coupons in the network in each step. Therefore, the total number of visits by all random walks to any node $v$ is bounded by $O\big(d(v) \log^2 n/\eps \big)$ w.h.p., since there are total of $\log n/\eps$ steps.
%
 \end{proof}

It is now clear from the above lemma (cf. Lemma~\ref{lem:visit-bound}) that $\eta = O\big(\log^2 n/\eps \big)$ i.e., each node $v$ has to prepare $O\big(d(v)\log^2 n/\eps \big)$ short walks of length $\lambda$ in Phase 1. Now we show the running time of our algorithm (cf. Algorithm \ref{alg:pr-walk-undirected}) using the following lemmas.  
\begin{lemma}\label{lem:phase1}
Phase 1 finishes in $O\left(\frac{\lambda}{\eps} \right)$ rounds.  
\end{lemma} 
\begin{proof}
It is known from  Lemma \ref{lem:visit-bound} that in Phase 1, each node $v$ performs $O\big(d(v)\log^2 n/\eps \big)$ walks of length $\lambda$. 
Assume that initially each node $v$ starts with $d(v)\log^2 n/\eps$ coupons (or messages) and each coupon takes a random walk according to the \pr transition probability. Now, in the similar way we showed in Lemma \ref{lem:visit-bound} that after any given number of steps $j$ $(1 \leq j \leq \lambda)$, the expected number of coupons at any node $v$ is $d(v)\log^2 n/\eps$. Therefore, in expectation the number of messages, say $X$, that want to go through an edge in any round is at most $2 \log^2 n/\eps$ (from the two end points of the edge). This is because the number of messages, the edge receives from its one end node, say $u$, in expectation is exactly the
number of messages at $u$ divided by $d(u)$. Using Chernoff bound we get, $\Pr[X\geq 24 \log^2 n/\eps] \leq 2^{-4 \log^2 n/\eps} \leq n^{-4}$. By union bound we get that  there exists an edge and an
integer $1\leq j\leq \lambda$ such that the probability of $X \geq 24\log^2 n/\eps$ is
at most $|E(G)| \lambda n^{-4}\leq \frac{1}{n}$, since $|E(G)|\leq n^2$ and $\lambda < n$. Hence the number of messages that go through any edge in any round is at most $24 \log^2 n/\eps = O(\log^2 n/\eps)$ with high probability. So the message size will be at most $O(\log^3 n/\eps)$ bits w.h.p. over any edge in each round (a message contains source IDs and coupon IDs each of which can be encoded using $\log n$ bits). Since our considered model allows polylogarithmic (i.e., $O(\log^3 n)$) bits messages per edge per round, we can extend all the random walk's length from $i$ to length $i+1$ in $O(1/\eps)$ rounds. Therefore, for walks of length $\lambda$ it takes $O(\lambda/\eps)$ rounds as claimed.   
\end{proof}

\begin{lemma}\label{lem:sample-coupon}
With the message size $O(\log n)$ in Phase 2, one stitching step from each node in parallel can be done in one round.  
\end{lemma} 
\begin{proof}
 Each node knows all of its short walks' (or coupons') destination address and the $Coupon_{ID}$. Each time when a source or connector node wants to stitch, it chooses its unused coupons (created in Phase 1) sequentially in order of the coupon IDs. Then it contacts the destination node (holding the coupon) through {\em direct} communication and informs the destination node as the next connector node or stitching point. Therefore, in each round, it is sufficient for any node to send to connector node $u$ the maximal $Coupon_{ID}$ with destination $u$ that it has used so far. This implies that message size of $O(\log n)$ bits per edge suffices for this process. Since we assume the network allows $O(\log^3 n)$ congestion, this one time stitching from each node in parallel will finish in one round.
\end{proof}

\begin{lemma}\label{lem:phase2}
Phase 2 finishes in $O\left(\frac{\log n}{\lambda \eps} + \lambda \right)$ rounds.  
\end{lemma} 
\begin{proof}
Phase 2 is for stitching short walks of length $\lambda$ to get a long walk of length $B_1\log n/\eps$, where the constant $B_1$ is chosen sufficiently large so that all the random walks terminate within this length with high probability. Therefore, it is sufficient to stitch approximately $O\big(\log n/\lambda \eps \big)$ times from each node in parallel. Since each  stitching step can be done in one of round (cf. Lemma \ref{lem:sample-coupon}), the stitching process takes $O\big(\frac{\log n}{\lambda \eps} \big)$ rounds. Now it remains to show the running time of completing the random walks at the end of Phase 2 (Step \ref{step:count} in Algorithm \ref{alg:pr-walk-undirected}). For this step, the length of the random walk  is less than $\lambda$, which are executed in parallel. In this case, we do not need to send any IDs or counters with the coupon, simply send the count of the tokens traversing an edge in a given round to the appropriate neighbors (i.e., in the similar way as of Algorithm \ref{alg:simple-pagerank-walk}). Each token corresponds to a random walk for the remaining length left to complete the length $L$. This will take at most $O(\lambda)$ rounds. Hence, Phase 2 finishes in $O\big(\frac{\log n}{\lambda \eps} + \lambda \big)$ rounds.  
\end{proof}


\begin{lemma}\label{lem:phase3}
Phase 3 finishes in $O\left(\frac{\lambda}{\eps} \right)$ rounds.  
\end{lemma} 
\begin{proof}
Recall that each short walk is of length $\lambda$. Phase 3 is simply tracing back the $\Theta(\log n)$ short walks from each node in parallel. So it is easy to see that we can perform all the reverse walks in parallel in $O(\lambda/\eps)$ rounds (in the same way as to do all
the short walks in parallel in Phase 1). Therefore, in accordance with the Lemma \ref{lem:phase1}, Phase 3 finishes in $O\big(\frac{\lambda}{\eps}\big)$ rounds.  
\end{proof}

Notice that the {\em Coupon IDs} are useful in this context, since the random walks starting at $v$ and ending at $u$ may have followed different paths; $u$ just knowing the number of random walks coming from $v$ is insufficient to backtrace the walks. Moreover, the nodes on the paths will need to know the $Coupon_{ID}$ as well for the same reason. Now we are ready to show the main result of this section. 

\begin{theorem}\label{thm:main}
 The {\sc Improved-PageRank-Algorithm} (cf. Algorithm \ref{alg:pr-walk-undirected}) computes a $\delta$- approximation of the PageRanks with high probability for any constant $\delta$ and finishes in $O\left(\frac{\sqrt{\log n}}{\eps} \right)$ rounds. 
\end{theorem}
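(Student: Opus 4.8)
The plan is to split Theorem~\ref{thm:main} into a correctness claim and a round-complexity claim, and to obtain each by assembling the lemmas already in hand rather than re-deriving anything. The observation that makes correctness cheap is that Phases~1--2, run in parallel from every source, produce exactly the same collection of random objects as the naive algorithm of Section~\ref{sec:simple-algo}: namely $K=c\log n$ PageRank random walks of length $\ell=\log n/\eps$ from each node. To see this I would argue that each length-$\lambda$ short walk created in Phase~1 is, conditioned only on its two endpoints, an independent sample of a $\lambda$-step PageRank walk; hence concatenating the short walk from $s$ to its uniformly sampled destination $v$ with a fresh short walk from $v$ to $v'$, and iterating, yields a walk whose successive increments have exactly the PageRank transition law, by the memorylessness of the walk. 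A reset occurring inside a short walk just means the long walk being assembled has already terminated, which is precisely the Monte-Carlo convention; and since the maximum length of a PageRank walk is $O(\log n/\eps)$ whp (Chernoff, as in Theorem~\ref{thm:main-round}), doing at most another $\lambda$ naive steps after reaching length $\ell-\lambda$ and then truncating at $\ell$ loses nothing whp.

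Two auxiliary points close the correctness argument. First, Phase~1 prepares enough raw material: by Lemma~\ref{lem:visit-bound}, when $\Theta(\log n)$ length-$\ell$ walks are run from every node, no node $v$ is visited more than $O(d(v)\log^3 n/\eps)$ times whp, so in particular $v$ is chosen as a connector at most that often; taking $\eta=\Theta(\log^3 n/\eps)$ therefore guarantees an unused coupon is always available at each stitching step whp, so Phase~2 never stalls. Second, Phase~3 retraces each stitched short walk backwards to its source, incrementing $\zeta_v$ on every visit; since the long walks consist exactly of these short walks, $\zeta_v$ is the total number of visits of all $\Theta(n\log n)$ long walks to $v$. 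Thus $\tilde\pi_v=\zeta_v\eps/(cn\log n)$ is precisely the estimator analyzed in Section~\ref{sec:correctness}, and \eqref{equ:convergence} with $K=c\log n$ (using $\pi_{\min}\ge\eps/n$) gives $|\tilde\pi_v-\pi_v|\le\delta\pi_v$ for every $v$ whp after a union bound over the $n$ nodes.

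For the round bound I would just sum the three phases. Phase~1 costs $O(\lambda/\eps)$ rounds whp (Lemma~\ref{lem:phase1}); Phase~2 performs $O(\ell/\lambda)=O(\log n/(\lambda\eps))$ stitches, each $O(1)$ rounds (Lemmas~\ref{lem:sample-coupon} and \ref{lem:phase2}), followed by $O(\lambda)$ naive steps; Phase~3 costs $O(\lambda)$ rounds whp (Lemma~\ref{lem:phase3}); the direct-communication steps (destinations reporting to sources, coupon sampling) cost $O(1)$ rounds each since only $O(\log n)$-bit messages are sent and the \textsc{CONGEST} model tolerates polylogarithmic congestion (cf. Lemmas~\ref{lem:congestion} and \ref{lem:sample-coupon}, together with the visit bound of Lemma~\ref{lem:visit-bound} that controls how many stitches cross an edge concurrently). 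The total is therefore $O\!\left(\tfrac{\lambda}{\eps}+\tfrac{\log n}{\lambda\eps}+\lambda\right)$ rounds whp, and setting $\lambda=\sqrt{\log n}$ balances the first two terms to give $O(\sqrt{\log n}/\eps)$; a union bound over the polynomially many walks and the $O(1)$ phases keeps the failure probability $1/n^{\Omega(1)}$.

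I expect the genuinely delicate step to be the correctness of the stitching, i.e. verifying that sampling a coupon uniformly from a connector's \emph{unused} set, deleting it, and jumping to its destination really reproduces the conditional PageRank transition law without bias, and that conditioning on short-walk endpoints does not secretly correlate the short walks stitched into a common long walk. The round-complexity half is essentially bookkeeping once Lemmas~\ref{lem:phase1}--\ref{lem:phase3} are granted; the only mild care there is confirming that all $\Theta(n\log n)$ long walks are stitched concurrently within the bandwidth bound, which again rests on Lemma~\ref{lem:visit-bound}.
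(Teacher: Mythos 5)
Your proposal is correct and follows essentially the same route as the paper: the paper's proof of Theorem~\ref{thm:main} is exactly your round-complexity bookkeeping, summing Lemmas~\ref{lem:phase1}, \ref{lem:phase2} and \ref{lem:phase3} to get $O(\frac{\lambda}{\eps}+\frac{\log n}{\lambda\eps}+\lambda)$ and setting $\lambda=\sqrt{\log n}$. Your additional correctness discussion (independence of the stitched short walks, sufficiency of $\eta=\Theta(\log^3 n/\eps)$ via Lemma~\ref{lem:visit-bound}, and the identification of $\zeta_v$ with the estimator of Section~\ref{sec:correctness}) is more explicit than the paper's own proof, which defers these points to the surrounding prose and to Section~\ref{sec:correctness}, but it matches the paper's intended argument.
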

\begin{proof}
The algorithm {\sc Improved-PageRank-Algorithm} consists of three phases. We have calculated above the running time of each phase separately. Now we want to compute the overall running time of the algorithm by combining these three phases and by putting appropriate value of parameters. By summing up the running time of all the three phases, we get from Lemmas \ref{lem:phase1}, \ref{lem:phase2}, and \ref{lem:phase3} that the total time taken to finish the {\sc Improved-PageRank-Algorithm} is $O\big(\frac{\lambda}{\eps}+ \frac{\log n}{\lambda \eps} + \lambda + \frac{\lambda}{\eps} \big)$ rounds. Choosing $\lambda = \sqrt{\log n}$, gives the required bound as $O\big(\frac{\sqrt{\log n}}{\eps}\big)$. The correctness and approximation guarantee follows from the previous section.    
\end{proof}

\section{Conclusion}\label{sec:conclusion}
We presented fast distributed algorithms for computing PageRank, a measure of fundamental interest
in networks. Our algorithms are Monte-Carlo and based on the idea of speeding up random walks
in a distributed network. Our faster algorithm takes time only sub-logarithmic in $n$  which can
be useful in large-scale, resource-constrained, distributed networks, where running time is especially crucial. Since they are based on random walks,
which are lightweight, robust, and local, they can be amenable to self-organizing and dynamic networks.

\section*{Acknowledgments}
We thank the anonymous reviewers  for their detailed comments which helped in improving the presentation
of the paper.


  
\bibliographystyle{abbrv}
\bibliography{Distributed-RW}

\end{document}